\crefname{hypothesis}{Hypothesis}{Hypotheses}
\newcommand{\fold}{\text{fold}}
\newcommand{\ffold}{\text{ffold}}
\begin{document}

\newcommand\relatedversion{}
\renewcommand\relatedversion{\thanks{The full version of the paper can be accessed at \protect\url{https://arxiv.org/abs/0000.00000}}} 

\title{\Large Improving Pinwheel Density Bounds for Small Minimums}
\author[]{Ahan Mishra}
\author[]{Parker Rho}
\author[]{Robert Kleinberg}
\affil[]{Cornell University}

\date{}

\maketitle

\begin{abstract} 
\mbox{}
\vspace{5pt}

The density bound for schedulability for general pinwheel instances is $\frac{5}{6}$, but density bounds better than $\frac{5}{6}$ can be shown for cases in which the minimum element $m$ of the instance is large. Several recent works have studied the question of the ``density gap'' as a function of $m$ \cite{regular} \cite{perpetual} \cite{covering}, with best known lower and upper bounds of $O \left( \frac{1}{m} \right)$ and $O \left( \frac{1}{\sqrt{m}} \right)$. We prove a density bound of $0.84$ for $m = 4$, the first $m$ for which a bound strictly better than $\frac{5}{6} = 0.8\overline{3}$ can be proven. In doing so, we develop new techniques, particularly a fast heuristic-based pinwheel solver and an unfolding operation. 

\vspace{5pt}
\end{abstract}

\section{Introduction}
\mbox{}

\subsection{Problem Statement}
\mbox{}
\vspace{5pt}

The pinwheel scheduling problem asks, given a list of integers $(a_1, a_2, a_3, \dots, a_n)$, interpreted as job periods, whether there exists a map $f : \mathbb{N} \rightarrow [n]$, interpreted as an assignment of which job to perform on each day, such that for every $i$, job $i$ is performed as least once in every interval of $a_i$ days. 

The density of an instance is $D(A) := \sum_{i = 1}^n \frac{1}{a_i}$. 

We can see that if an instance has density greater than 1, it is not schedulable. It has recently been proved that every instance with density at most $\frac{5}{6}$ is schedulable \cite{kawamura}. The corresponding statement is not true for any $d > \frac{5}{6}$ due to $[2, 3, x]$ being unschedulable for all $x$. 

While a $\frac{5}{6}$ density bound is the best that can be said for general pinwheel instances, we study a related problem of finding density bounds given instance minimums. Specifically, let us define $G(m)$ as the largest real number such that every pinwheel instance with all periods at least $m$ and density at most $G(m)$ is schedulable. 

We can see that $G$ is monotonically increasing and due to the $\frac{5}{6}$ result we know that $G(1) = \frac{5}{6}$. 

\vspace{5pt}

\subsection{Related Work}
\mbox{}
\vspace{5pt}

Understanding the density gap (i.e., $1 - G(m)$) as $m$ scales is important because, in practice, we may be interested in scheduling pinwheel instances with a minimum job period significantly greater than 1. 

For example, the original application mentioned in the paper introducing the pinwheel problem is that of a ground station communicating with multiple satellites \cite{original}. The worst-case instance mentioned previously, $[2, 3, x]$, corresponds to having a satellite that must be serviced at least once every two time steps. A more realistic situation might be for a ground station to communicate with dozens of satellites, each of which only needs to be serviced less than 10\% of the time (so job periods greater than 10). 

The current best known upper and lower bound for the density gap are $O \left( \frac{1}{\sqrt{m}} \right)$ and $O \left( \frac{1}{m} \right)$. 

In particular, it is known that for every $m$, any pinwheel instance with elements at most $m$ and density at most $1 - \frac{1 + \ln 2}{1 + \sqrt{m}} - \frac{3}{2m}$ is schedulable \cite{covering}. On the other side, it can be shown that $(m, m + 1, m + 1, \dots, m + 1, x)$ (where there are $m - 1$ repetitions of the period $m + 1$) is unschedulable for any $m$ and $x$ \cite{covering}. 

Therefore, we have
$$1 - \left( \frac{1 + \ln 2}{1 + \sqrt{m}} + \frac{3}{2m} \right) \leq G(m) \leq 1 - \left( \frac{m - 1}{m(m + 1)} \right)$$

Both the upper and lower bounds go to 1 asymptotically, but we may hope for significantly tighter bounds in practice. 

In particular, for any $m \leq 100$, $\frac{1 + \ln 2}{1 + \sqrt{m}} - \frac{3}{2m} \geq \frac{1}{6}$.

In other words, in order to improve on the $\frac{5}{6}$ result which is known via other methods, we need to restrict our attention to instances with period minimums of at least 100. A minimum of 100 is significantly larger than we may hope for in practice. In addition, it has been conjectured that the upper bound is asymptotically tight \cite{covering}, so the exact density bound for instances with period minimum of 100 could be as much as $1 - \frac{1}{100} = \frac{99}{100}$. 

So, there is a significant divergence between theory and practice for minimums less than 100

The only value of $G(m)$ which has been explicitly found exactly is $G(1) = \frac{5}{6}$, but we can easily see that $G(2) = \frac{5}{6}$ and $G(3) = \frac{5}{6}$ as well since $[2, 3, x]$ and $[3, 4, 4, x]$ are unschedulable for all $x$.

\vspace{5pt}

\subsection{Contribution}
\mbox{}
\vspace{5pt}

Since $G(3) = \frac{5}{6}$, the first possible $m$ for which we may be able to prove a density bound greater than $\frac{5}{6}$ is $m = 4$, which we do. Our improved density bound relies on computational assistance and is extensible. We also detail new techniques, particularly an unfolding operation and a fast pinwheel solver. In particular, we 

\begin{enumerate}
    \item Show that $G(4) \geq 0.84$ by proving that every instance with elements at least four and density at most 0.84 is schedulable. 
    \item Implement an unfolding operation which generalizes Kawamura's method for proving the $\frac{5}{6}$ bound \cite{kawamura}, and use it to prove the 0.84 density bound. 
    \item Develop a heuristic-based pinwheel solver which, in some cases, can find schedules much faster than the current best solver, particularly when there are at least 10 jobs. 
\end{enumerate}

\section{An Improved Density Bound for \texorpdfstring{\boldmath$m = 4$}{}}
\mbox{}

\subsection{Proof}\label{subsection:proof}
\mbox{}

\vspace{5pt}

We use the same fold operation introduced by Kawamura \cite{kawamura} and several relevant facts. 

\begin{algorithm}[ht]
  \DontPrintSemicolon
  \caption{Fold Operation $\fold_{\theta}(A)$.}   \label{algo:fold}
  As input, take a pinwheel instance $A = (a_1, a_2, \dots, a_n)$ \;
  
  \While{$\max(A) > \theta$}{
    Let $a$ and $b$ be the largest and second largest values in $A$, allowing for repetition.

    Delete $a$ from $A$.
    
    \If{$b > \theta$}{        
    Delete $b$ from $A$, and add $b/2$ to $A$. 
    }
    
    \Else{
    Add $\theta$ to $A$.
    }
    
    \Return $A$
}
\end{algorithm}

\begin{fact}\label{fact:monotonicity}
(Monotonicity Principle) For any $a \leq b$, if the instance $A \sqcup (a)$ is schedulable, then the instance $A \sqcup (b)$ is schedulable. \cite{kawamura}
\end{fact}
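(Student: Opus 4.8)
The plan is to reuse the schedule for the appended-$b$ instance essentially verbatim, exploiting the fact that enlarging a period only relaxes a job's deadline. Suppose $f : \mathbb{N} \to [n+1]$ is a valid schedule for $A \sqcup (a)$, where we index the jobs of $A = (a_1,\dots,a_n)$ by $1,\dots,n$ and the appended job (of period $a$) by $n+1$. I would define a candidate schedule $f'$ for $A \sqcup (b)$ to be literally the same function $f$, only now reinterpreting the label $n+1$ as the job of period $b$ rather than period $a$.

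It then remains to check that $f'$ satisfies all the period constraints of $A \sqcup (b)$. For each $i \le n$ the constraint is identical in the two instances — job $i$ must appear at least once in every window of $a_i$ consecutive days — and $f' = f$ on these labels, so validity is inherited directly from $f$. For the appended job, I need to show that under $f'$ the label $n+1$ occurs at least once in every block of $b$ consecutive days. Fix such a block $I$. Since $a \le b$ and $a$ is a positive integer, $I$ contains a sub-block $I'$ of exactly $a$ consecutive days. Because $f$ is valid for $A \sqcup (a)$, the label $n+1$ is used on some day of $I'$, hence on some day of $I$. Thus the $b$-day deadline holds for every such $I$, and $f'$ is a valid schedule for $A \sqcup (b)$.

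There is essentially no hard step here: the argument is a one-line relabeling together with the elementary observation that any window of $b$ consecutive integers contains a window of $a$ consecutive integers whenever $1 \le a \le b$. The only thing to be careful about is the bookkeeping of which label plays which role in each instance. The statement is just the formalization of the intuition that increasing a job's period can never destroy schedulability.
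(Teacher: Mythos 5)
Your proof is correct and is exactly the standard argument one would give for this fact. Note that the paper states this as a cited \textbf{Fact} attributed to Kawamura and does not include its own proof, but your relabeling-plus-window-containment argument is the intended one and has no gaps.
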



\begin{fact}\label{fact:schedulable}
    If $A$ is unschedulable, then $\fold_{\theta}(A)$ is unschedulable \cite{kawamura}.
\end{fact}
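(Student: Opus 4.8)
The plan is to prove the contrapositive: if $\fold_{\theta}(A)$ is schedulable then so is $A$. Since $\fold_{\theta}(A)$ is produced from $A$ by finitely many iterations of the loop — finitely many because each iteration strictly decreases the number of elements of $A$ exceeding $\theta$ (when $b>\theta$, the two elements $a,b>\theta$ are deleted and only the single new element $b/2$ can possibly exceed $\theta$; when $b\le\theta$, the unique element $a>\theta$ is deleted and the newly added value $\theta$ does not exceed $\theta$) — it suffices to prove the one-step statement: if $A'$ is obtained from $A$ by a single iteration and $A'$ is schedulable, then $A$ is schedulable. Composing this implication backwards along the chain of intermediate instances then gives the fact. (The degenerate case $|A|=1$ is trivial, since a singleton instance is always schedulable.)

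Fix a single iteration. In the case $b\le\theta$ we have $A=A''\sqcup(a)$ with $a>\theta$ and $A'=A''\sqcup(\theta)$; since $\theta\le a$, the Monotonicity Principle (Fact~\ref{fact:monotonicity}), applied with the pair $\theta\le a$ over the common part $A''$, states exactly that schedulability of $A''\sqcup(\theta)$ implies schedulability of $A''\sqcup(a)=A$, so this case is immediate.

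The case $b>\theta$ is the substantive one: here $a\ge b>\theta$, $A=A''\sqcup(a,b)$, and $A'=A''\sqcup(b/2)$. Given a valid schedule $f$ for $A'$, let $t_1<t_2<\cdots$ be the days on which $f$ serves the period-$(b/2)$ job, so $t_1\le b/2$ and $t_{j+1}-t_j\le b/2$ for all $j$. I build a schedule $g$ for $A$ that agrees with $f$ on every day not in $\{t_j\}$ — so each job of $A''$ keeps its service days and remains satisfied — and that assigns the period-$b$ job to the days $t_1,t_3,t_5,\dots$ and the period-$a$ job to the days $t_2,t_4,t_6,\dots$. Then the period-$b$ job is served by day $t_1\le b$ and, since $t_{2k+1}-t_{2k-1}=(t_{2k+1}-t_{2k})+(t_{2k}-t_{2k-1})\le b/2+b/2=b$, within every window of $b$ consecutive days; the period-$a$ job is served by day $t_2\le t_1+b/2\le b\le a$ and, with consecutive chosen days again at most $b\le a$ apart, within every window of $a$ consecutive days. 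Hence $g$ is valid, so $A$ is schedulable.

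I expect the $b>\theta$ case to be the main obstacle: it is where one actually "unfolds" a single period-$(b/2)$ job into a period-$a$ job and a period-$b$ job, and where the inequalities on the first occurrence and the crucial use of $a\ge b$ must be handled carefully. The termination argument, the $b\le\theta$ case (pure monotonicity), and the backward composition over iterations are all routine.
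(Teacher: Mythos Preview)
Your argument is correct and is precisely the standard proof; note that the paper itself does not prove this fact but simply cites it from Kawamura, and the reasoning you give---monotonicity for the $b\le\theta$ branch, and alternating the service days of the period-$(b/2)$ job between the period-$a$ and period-$b$ jobs for the $b>\theta$ branch---is exactly the argument used there. The only minor wrinkle worth flagging is that when $b$ is odd the value $b/2$ is non-integer, so the inequality $t_{j+1}-t_j\le b/2$ must be read under whatever convention one adopts for fractional periods; under any of the natural ones (including the one the paper adopts later for its solver) every window of $b$ consecutive days still contains at least two services of the $(b/2)$-job and hence at least one occurrence of each parity, so your conclusion holds unchanged.
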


\begin{fact}\label{fact:theta}
    For every $a$ in $\fold_{\theta}(A)$, $a \leq \theta$ \cite{kawamura}.
\end{fact}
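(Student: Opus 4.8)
The plan is to read $\fold_{\theta}$ as the algorithm that repeats the body of the \textbf{while} loop until its guard $\max(A) > \theta$ fails, returning the resulting instance (i.e., the \textbf{Return} is reached only after the loop terminates). Granting this reading, the statement is almost immediate: if the procedure terminates, then on exit the loop guard is false, so $\max(A) \le \theta$, which is exactly the assertion that every $a$ in $\fold_{\theta}(A)$ satisfies $a \le \theta$. Hence the only real content is to prove that the loop terminates.

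For termination, the plan is to exhibit a strictly decreasing nonnegative integer potential, namely $\phi(A) := \lvert\{\, a \in A : a > \theta \,\}\rvert$, the number of ``oversized'' elements counted with multiplicity. In the branch where $b > \theta$, one iteration deletes both $a$ and $b$ — two oversized elements, since $a \ge b > \theta$ occupy distinct positions in the multiset — and inserts the single value $b/2$; thus $\phi$ decreases by $2$ if $b/2 \le \theta$ and by $1$ if $b/2 > \theta$, strictly either way. In the branch where $b \le \theta$ (or there is no second-largest element), every element other than $a$ is $\le \theta$, and $a$ cannot have multiplicity greater than $1$ (otherwise the second-largest value ``allowing repetition'' would also be $a > \theta$), so $\phi(A) = 1$; this iteration deletes $a$ and inserts $\theta \le \theta$, dropping $\phi$ to $0$. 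So $\phi$ strictly decreases every iteration, the loop runs at most $\phi(A)$ times, and then the failed guard gives $\max(A) \le \theta$.

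I expect no serious obstacle — this is a routine termination/loop-invariant argument — but the one point that needs care is the choice of potential. Neither the density $D(A)$ nor the cardinality $\lvert A\rvert$ works directly: $D$ is only non-decreasing under the fold step (indeed $\tfrac1a + \tfrac1b \le \tfrac2b$ when $a \ge b$), and $\lvert A\rvert$ stays constant in the ``add $\theta$'' branch. The observation that makes $\phi$ work is that in the halving branch we remove \emph{two} oversized elements while adding back only \emph{one} value, $b/2 < b \le a$, so even when $b/2$ is itself still oversized the count of oversized elements must strictly drop. With termination established, the conclusion is delivered by the loop guard.
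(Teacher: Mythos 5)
The paper states this as a fact imported from Kawamura \cite{kawamura} and offers no proof of its own, so there is no in-paper argument to compare against; you supply one from scratch, and it is correct. Your argument is the natural one: on exit the while-guard $\max(A) > \theta$ has failed, so all the content is termination, and the potential $\phi(A) = \lvert\{\,a \in A : a > \theta\,\}\rvert$ strictly decreases every iteration — by one or two in the halving branch (both $a \ge b > \theta$ are removed while only the single, possibly still oversized, value $b/2$ is added back), and from $1$ to $0$ in the add-$\theta$ branch (where $a$ is necessarily the unique oversized element, since otherwise the second-largest value with repetition would also exceed $\theta$). The one interpretive step you flag — reading the \textbf{Return} in \cref{algo:fold} as executing after the loop terminates rather than inside the loop body as literally typeset — is both necessary and correct: taken literally, the procedure would halt after a single iteration and the fact would fail whenever $A$ contains three or more elements exceeding $\theta$; your reading is the only one consistent with how $\fold_\theta$ is used elsewhere in the paper (e.g.\ the remark in the proof of \cref{lemma:foldsched} that $\fold_{\theta}(A)$ has elements at most $\theta$, and \cref{fact:density}). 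Your proof is sound and complete.
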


\begin{fact}\label{fact:density}
    For any $\theta$, $D(\fold_{\theta}(A)) \leq D(A) + \frac{1}{\theta}$ \cite{kawamura}.
\end{fact}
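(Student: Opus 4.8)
The plan is to trace a single execution of $\fold_{\theta}(A)$ and bound the total change in density one iteration at a time, using the fact that each iteration strips off the current maximum and (usually) halves the current second-largest element.

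First I would fix notation: suppose the \textbf{while} loop runs for $T$ iterations (if $T=0$ there is nothing to prove), and for $t=1,\dots,T$ let $a_t$ and $b_t$ be the largest and second-largest elements of $A$ at the start of iteration $t$. Two structural observations do the work. (1)~If iteration $t$ takes the \emph{else} branch, then $b_t\le\theta$, so every element other than $a_t$ is already $\le\theta$; after deleting $a_t$ and inserting $\theta$ the whole instance lies in $[1,\theta]$ and the loop halts. Hence at most one iteration takes the \emph{else} branch, and if one does it is iteration $T$, so iterations $1,\dots,T-1$ all take the \emph{if} branch. (2)~After an \emph{if}-iteration $t$ every element still present is $\le b_t$ (every element other than $a_t,b_t$ was already $\le b_t$, and the new element is $b_t/2<b_t$), so $a_{t+1}\le b_t$; consequently $a_1\ge b_1\ge a_2\ge b_2\ge\cdots$.

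Next I would account for density: deleting $a_t$ changes $D$ by $-1/a_t$; replacing $b_t$ by $b_t/2$ changes it by $2/b_t-1/b_t=1/b_t$; inserting $\theta$ changes it by $1/\theta$. By observation~(2), $1/b_t\le 1/a_{t+1}$ for $t=1,\dots,T-1$, so the cumulative density change over iterations $1,\dots,T-1$ (which are all \emph{if}-iterations) is
\[
\sum_{t=1}^{T-1}\Bigl(\frac{1}{b_t}-\frac{1}{a_t}\Bigr)\ \le\ \sum_{t=1}^{T-1}\Bigl(\frac{1}{a_{t+1}}-\frac{1}{a_t}\Bigr)\ =\ \frac{1}{a_T}-\frac{1}{a_1}.
\]
Iteration $T$ changes $D$ by $1/\theta-1/a_T$ if it takes the \emph{else} branch and by $1/b_T-1/a_T\le 1/\theta-1/a_T$ if it takes the \emph{if} branch (as then $b_T>\theta$). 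Adding the two contributions, the residual $1/a_T$ cancels and $D(\fold_{\theta}(A))-D(A)\le(1/a_T-1/a_1)+(1/\theta-1/a_T)=1/\theta-1/a_1<1/\theta$.

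I expect the real obstacle to be exactly the handling of the last iteration: telescoping over all $T$ iterations at once would give only $D(\fold_{\theta}(A))-D(A)\le 1/\max(\fold_{\theta}(A))-1/a_1$, and this can exceed $1/\theta$, so one genuinely has to separate out iteration $T$ and notice that in both of its forms it contributes at most $1/\theta-1/a_T$, matching the telescoped remainder precisely. A secondary and very minor point is the degenerate iteration in which $A$ has become a singleton, so that no ``second-largest'' element exists; adopting the convention that this case takes the \emph{else} branch makes the trivial estimate $1/\theta-1/a<1/\theta$ apply there as well.
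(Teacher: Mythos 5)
Your argument is correct. Note, however, that the paper does not prove this fact at all---it is stated as a citation to Kawamura---so there is no in-paper proof to compare against. Your derivation is self-contained and sound: the two structural observations (that an else-branch iteration forces the loop to halt, hence occurs at most once and only last; and that $a_{t+1}\le b_t$ after every if-iteration) are exactly what is needed to make the telescoping work, and separating out the final iteration so that the $1/a_T$ terms cancel is indeed the crux, as you correctly identify. The only cosmetic remark is that your bound is actually the slightly sharper $D(\fold_\theta(A)) \le D(A) + 1/\theta - 1/a_1$ whenever the loop runs, which of course implies the stated inequality.
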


\begin{lemma}\label{lemma:foldsched}
For any pinwheel instance $A$, there exists $\theta \in \mathbb{N}$ such that $\fold_{\theta}(A)$ is schedulable. 
\end{lemma}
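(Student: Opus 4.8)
The plan is to name one value of $\theta$ explicitly. Observe that the \texttt{while} loop of Algorithm~\ref{algo:fold} runs only while $\max(A) > \theta$, so taking $\theta := \max(A)$ makes the guard fail at once and $\fold_{\max(A)}(A) = A$. Since $\max(A)$ is a positive integer, $\theta := \max(A) \in \mathbb{N}$ is a legitimate choice, and $\fold_\theta(A) = A$ is schedulable. For this to say anything one must of course know that $A$ itself is schedulable — without such a hypothesis the statement is false, since by Fact~\ref{fact:schedulable} no threshold can repair an unschedulable instance — so I read the lemma as quantifying over schedulable pinwheel instances, which is exactly the converse direction to Fact~\ref{fact:schedulable}.

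In this bare form the only step is the observation that folding with a threshold at least $\max(A)$ is the identity map. What makes the statement worth recording is the use it is put to: in the density argument one wants a working $\theta$ whose magnitude is controlled by $\min(A)$ and $D(A)$, so that $\fold_\theta(A)$ ranges over a bounded family the solver can certify, rather than a $\theta$ as large as $\max(A)$. The substantive target is therefore the refinement that \emph{for every schedulable $A$ there is a schedulable $\fold_\theta(A)$ with $\theta$ bounded by an explicit function of $\min(A)$ and $D(A)$ alone}. To reach it I would start from a periodic schedule for $A$, push the threshold down step by step, and at each step use Fact~\ref{fact:density} to keep $D(\fold_\theta(A))$ small, Fact~\ref{fact:monotonicity} to swap the short periods created by halving for longer ones without losing schedulability, and the unfolding operation to carry a schedule found at the reduced threshold back up to the original instance.

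The hard part will be that $\fold_\theta$ does not preserve schedulability for an arbitrary $\theta$: a small threshold can concentrate density — repeated halving of a large period produces an effectively tiny one, pushing even a comfortably schedulable instance above density $1$ — so one cannot simply take the smallest convenient value. The argument must instead establish a monotonicity in $\theta$: as $\theta$ increases from a modest starting value up to $\max(A)$, schedulability of $\fold_\theta(A)$ is eventually attained and thereafter persists, and one must then locate a $\theta$ in that range that is both schedulable and not too large. Proving that persistence, while simultaneously keeping track of the non-integer periods that repeated halving introduces, is where I expect essentially all of the work to lie; the trivial case $\theta = \max(A)$ is just what guarantees the admissible range of thresholds is nonempty.
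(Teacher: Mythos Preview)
Your reading of the hypothesis is the gap. You take the lemma to quantify over \emph{schedulable} instances, but look at how the paper applies it in the theorem immediately following: the lemma produces a schedulable $\fold_\theta(A)$, and then the contrapositive of Fact~\ref{fact:schedulable} is invoked to conclude that $A$ itself is schedulable. If the lemma already assumed $A$ schedulable, that deduction would be circular and the theorem would have no content. The intended (if unstated) hypothesis is the one carried throughout the section: $A$ has all elements at least $4$ and $D(A) \le 0.84$. Under that hypothesis your choice $\theta = \max(A)$ proves nothing, because schedulability of $A$ is precisely what is at stake.

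The paper's proof is also not the threshold-lowering argument you sketch in your second and third paragraphs. It is a computer-assisted finite enumeration: for each even $\theta$ from $12$ to $30$ the authors maintain explicit finite sets $L_\theta$ (certified schedulable) and $R_\theta$ (residual), verify that $\ffold_\theta(A)$ always lands in $L_\theta \cup R_\theta$ whenever the density and minimum constraints hold, verify that every instance in $L_\theta$ has a schedule, and finally check that $R_{30}$ is empty. An algorithm then walks $\theta$ from $12$ upward and returns the first $\theta$ with $\ffold_\theta(A) \in L_\theta$; the enumeration guarantees this terminates by $\theta = 30$. Your speculative outline --- lower $\theta$ step by step, track non-integer periods, prove a monotonicity-in-$\theta$ statement --- is not what the paper does, and you do not carry it out in any case, so as written the proposal does not establish the lemma under either reading of the hypothesis.
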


\begin{proof}

This proof is accomplished by enumerating lists across different values of $\theta$. In particular, for every even $\theta$ between 12 and 30, we have a lists.csv and a removed.csv file. Let us denote the corresponding sets of instances as $L_{\theta}$ and $R_{\theta}$, respectively. 

Let us additionally define 
\[
D'_{\theta}(A) = \sum_{i = 1}^n
\begin{cases} 
1/a_i & \text{if } a_i < \frac{\theta}{2}, \\
1/(a_i + 1) & \text{if } a_1 \geq \frac{\theta}{2}.
\end{cases}
\]

and let us define $\ffold_{\theta}(A)$ as the result of taking every element $a$ in $\fold_{\theta}(A)$ and replacing it with $\lceil a \rceil - 1$ (this is the same operation described by Kawamura \cite{kawamura}).

Through explicit enumeration and certification, we know the following properties:

\begin{enumerate}
    \item For $\theta = 12$, every possible integer instance $A$ with elements at least 4, at most $\theta - 1$, and $D'_{\theta}(A)$ at most $0.84 + \frac{1}{\theta}$ is contained either in $L_{\theta}$ or $R_{\theta}$.
    \item For $\theta \in [14, 16, 18, \dots, 30]$, every possible integer instance $A$ with elements at most $\theta - 1$, $D'_{\theta}(A)$ at most $0.84 + \frac{1}{\theta}$, and satisfying $\fold_{\theta - 2}(A) \in R_{\theta - 2}$ is contained either in $L_{\theta}$ or $R_{\theta}$.
    \item For every instance $A$ in $L_{\theta}$, $A$ is schedulable.
    \item $R_{30}$ is empty.
\end{enumerate}

We now provide an algorithm to construct the $\theta$ value guaranteed in the lemma.

\begin{algorithm}[ht]
  \DontPrintSemicolon
  \caption{$\theta$ Generator.}   \label{algo:theta}
  As input, take a pinwheel instance $A = (a_1, a_2, \dots, a_n)$ \;

  Set $\theta \gets 12$
  
  \While{$\theta \leq 30$}{    
    \If{$\textnormal{ffold}_{\theta}(A) \in L_{\theta}$}{
        Return $\theta$
    }

    $\theta \gets \theta + 2$
}
\end{algorithm}

\begin{figure*}[t]
    \centering
    \includegraphics[scale=0.6]{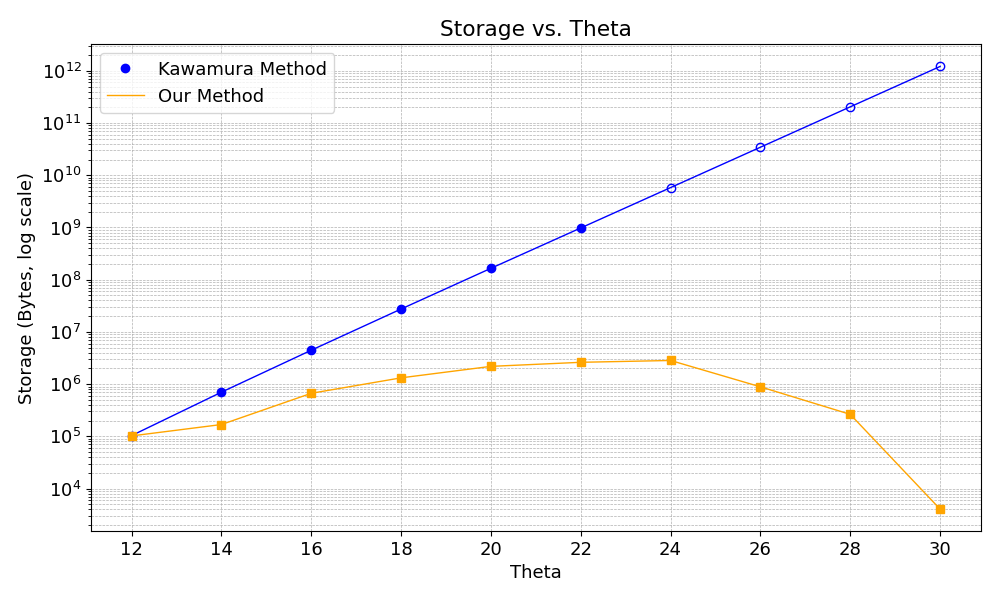}
    \caption{Storage requirements for our method vs. Kawamura's original method \cite{kawamura}.}
    \label{fig:storage}
\end{figure*}

We first prove that \cref{algo:theta} returns an output for every possible pinwheel instance $A$ satisfying $D(A) \leq 0.84$.

Suppose, for the sake of contradiction, that the algorithm returns nothing. 

First of all, note that for any $A$ and any $\theta$, $\fold_{\theta}(A)$ has elements at most $\theta$ and density at most $D(A) + \frac{1}{\theta}$. To go from $\fold_{\theta}(A)$ to $\ffold_{\theta}(A)$ we decrease every changed number by at most 1, so the maximum value becomes $\theta - 1$, and the only non-integers produced by $\fold_{\theta}(A)$ are greater than $\frac{\theta}{2}$, so $D'_{\theta}(\ffold_{\theta}(A)) \leq D(\fold(A))$.

Now, by property 1, for any $A$, $\ffold_{12}(A)$ is either in $L_{12}$ or $R_{12}$. If it were in $L_{12}$, then the algorithm would have returned 12, so it must have been in $R_{12}$.

By property 2, for any $A$ which has $\ffold_{\theta}(A) \in R_{\theta}$, the corresponding value of $\ffold_{\theta + 2}(A)$ is contained in either $L_{\theta + 2}$ or $R_{\theta + 2}$. Since the algorithm returns no output, it cannot be in $L_{\theta + 2}$. Therefore, having $\ffold_{\theta}(A)$ in $R_{\theta}$ implies $\ffold_{\theta + 2}(A)$ being in $R_{\theta + 2}$. 

Inducting, we have that $\ffold_{30}(A)$ is in $R_{30}$. However, $R_{30}$ is empty from property 4, a contradiction. This proves that the algorithm returns some output $\theta_o$.

By property 3, $\ffold_{\theta_o}(A)$ is schedulable, and by the monotonicity principle this means $\fold_{\theta_o}$ is schedulable as well, completing the proof.

\end{proof}

\begin{theorem}
    Every pinwheel instance with elements at least 4 and density at most 0.84 is schedulable.
\end{theorem}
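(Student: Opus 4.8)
The plan is to obtain the theorem as an immediate corollary of \cref{lemma:foldsched} combined with the contrapositive of \cref{fact:schedulable}. Fix an arbitrary pinwheel instance $A$ all of whose elements are at least $4$ and with $D(A) \le 0.84$. First I would apply \cref{lemma:foldsched} to this $A$ --- noting that its proof via \cref{algo:theta} is set up precisely for instances with minimum element at least $4$ and density at most $0.84$, since property~1 is stated in exactly that regime and the $+\frac{1}{\theta}$ slack in properties~1 and~2 accounts for the density increase caused by folding (\cref{fact:density}) --- to produce a value $\theta \in \mathbb{N}$ for which $\fold_{\theta}(A)$ is schedulable.

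Next I would invoke \cref{fact:schedulable}: folding never turns an unschedulable instance into a schedulable one, so if $A$ were unschedulable then $\fold_{\theta}(A)$ would be unschedulable as well. But we have just exhibited a $\theta$ for which $\fold_{\theta}(A)$ is schedulable, so $A$ must itself be schedulable. Since $A$ was an arbitrary instance with minimum element at least $4$ and density at most $0.84$, this establishes the theorem.

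The genuine content here has already been discharged inside \cref{lemma:foldsched}, whose proof rests on the certified enumeration encoded in the lists.csv and removed.csv files (properties 1--4) together with the induction on $\theta$. Consequently the only thing requiring care in writing up the theorem is the bookkeeping linking $D$, $D'_{\theta}$, $\fold_{\theta}$, and $\ffold_{\theta}$, so that the instance actually matched against the certified lists is the one we intend; I expect this routine chain of inequalities (decreasing each folded value by at most $1$, controlling the non-integers produced, and bounding $D'_{\theta}(\ffold_{\theta}(A))$ by $D(\fold_{\theta}(A))$) to be the only mild obstacle. No new idea is needed beyond assembling \cref{lemma:foldsched} and \cref{fact:schedulable}.
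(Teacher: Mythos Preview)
Your proposal is correct and matches the paper's proof essentially line for line: apply \cref{lemma:foldsched} to obtain a $\theta$ with $\fold_{\theta}(A)$ schedulable, then use the contrapositive of \cref{fact:schedulable} to conclude $A$ is schedulable. Your added remark that the hypotheses ``minimum element at least $4$'' and ``$D(A)\le 0.84$'' are exactly what the proof of \cref{lemma:foldsched} actually uses (even though the lemma is phrased more broadly) is a helpful clarification, but introduces no new argument beyond what the paper does.
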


\begin{proof}
    By \cref{lemma:foldsched}, there exists $\theta \in \mathbb{N}$ such that $\text{fold}_{\theta}(A)$ is schedulable. 

    By the contrapositive of \cref{fact:schedulable}, this means that $A$ is schedulable as well, as desired.
\end{proof}

\subsection{Unfolding Operation}
\mbox{}
\vspace{5pt}

Property 2 in \cref{subsection:proof} requires an unfolding operation to implement efficiency. We first describe the core ideas of the unfolding operation that we implemented, then describe the merits of restructuring the proof used by Kawamura \cite{kawamura} into that of \cref{subsection:proof}.

\subsubsection{Implementation}
\mbox{}
\vspace{5pt}

To unfold a large set of lists, we simply unfold each list in the set to get possibilities for the pre-image before folding, and then take the union. So, we describe how to unfold a single list. 

For example, the $[5,7,8,8,11,15,15,15,15]$, which is in $R_{16}$, generates the following two unfolded lists for the $\theta = 18$ case:

\begin{enumerate}
    \item $5, 7, 11, 15, 15, 15, 16, 17, 17, 17, 17$
    \item $5, 7, 11, 15, 15, 15, 17, 17, 17, 17, 17$
\end{enumerate}

Given an instance in $R_{\theta}$ which we want to unfold to have a fold parameter of $\theta + 2$, note that every element that is not $\frac{\theta}{2}$ or $\theta - 1$ must be in the unfolded instance. For any $\theta/2$ element, it could correspond to either a value of $\theta$ or a value of $\theta + 1$ in the unfolded instance. For each of these cases, at most one element of value $\theta - 1$ can optionally be changed, either to $\theta$ or $\theta + 1$. 

After considering all such transformations as described, we apply the $D'_{\theta + 2}(A) \leq 0.84 + \frac{1}{\theta + 2}$ requirement to determine the final unfolded lists.

\subsubsection{Scalability Improvements}
\mbox{}
\vspace{5pt}

With the Kawamura technique, in which we consider every possible integer instance with elements at most $\theta - 1$ and some density bound, the number of instances to consider increases exponentially in $\theta$. This situation gets worse as the density bound we want to prove increases with increasing $m$ (we proved a bound of 0.84 for $m = 4$).

As shown in \cref{fig:storage}, our method demonstrates significantly better scalability than Kawamura's original method \cite{kawamura}. 

Under Kawamura's method, the $\theta = 30$ case would take approximately 1 petabyte to store the set of lists that need to be solved (note that for $\theta \in [24, 26, 28, 30]$) the storage requirements are projected from those for $\theta \in [12, 14, 16, 18, 20, 22]$. Even if storing 1 petabyte were feasible for some, communicating the files would significantly strain networking resources. 

With the current method and compute resources for full surface generation, it would take approximately 4 years (projected) to list all the instances that need to be solved for the $\theta = 30$ case. The time needed to solve all such instances would likely be much longer. These conditions also pose significant issues for tractable  certifiability. 

On the other hand, the total storage required for list enumeration under our method is 10 MB, and 6 seconds are needed for certification. 

\section{A Faster Pinwheel Solver}
\mbox{}

\begin{algorithm}
  \DontPrintSemicolon
  \caption{Partition Filter.}   \label{algo:partition}
  As input, take a list $L$ and possible partition element $E$ \;

  \If{$\textnormal{len}(E) = 2$}{
  \If{$L[E[1]] - L[E[0]] \leq 2$}{
  Return ``valid''
  }

  \Else{
  Return ``invalid''
  }
  }
  
  \If{$\textnormal{len}(E) = 3$}{
  \If{$(L[E[1]] - L[E[0]]) + (L[E[2]] - L[E[0]]) \leq 2$}{
  Return ``valid''
  }

  \Else{
  Return ``invalid''
  }
  }

  \If{$\textnormal{len}(E) = 5$}{
  $sum \gets 0$

  \For{$i \in [0, 1, 2, 3, 4]$}{
  $sum \gets sum + \left( L[E[i]] - 5 \cdot \left\lfloor \frac{L[E[0]]}{5} \right\rfloor \right)$
  }
  
  \If{$sum \leq 3$}{
  Return ``valid''
  }
  \Else{
  Return ``invalid''
  }
  }

  \Else{
  Return ``invalid''
  }
\end{algorithm}

\begin{figure*}[t]
    \centering
    \includegraphics[scale=0.6]{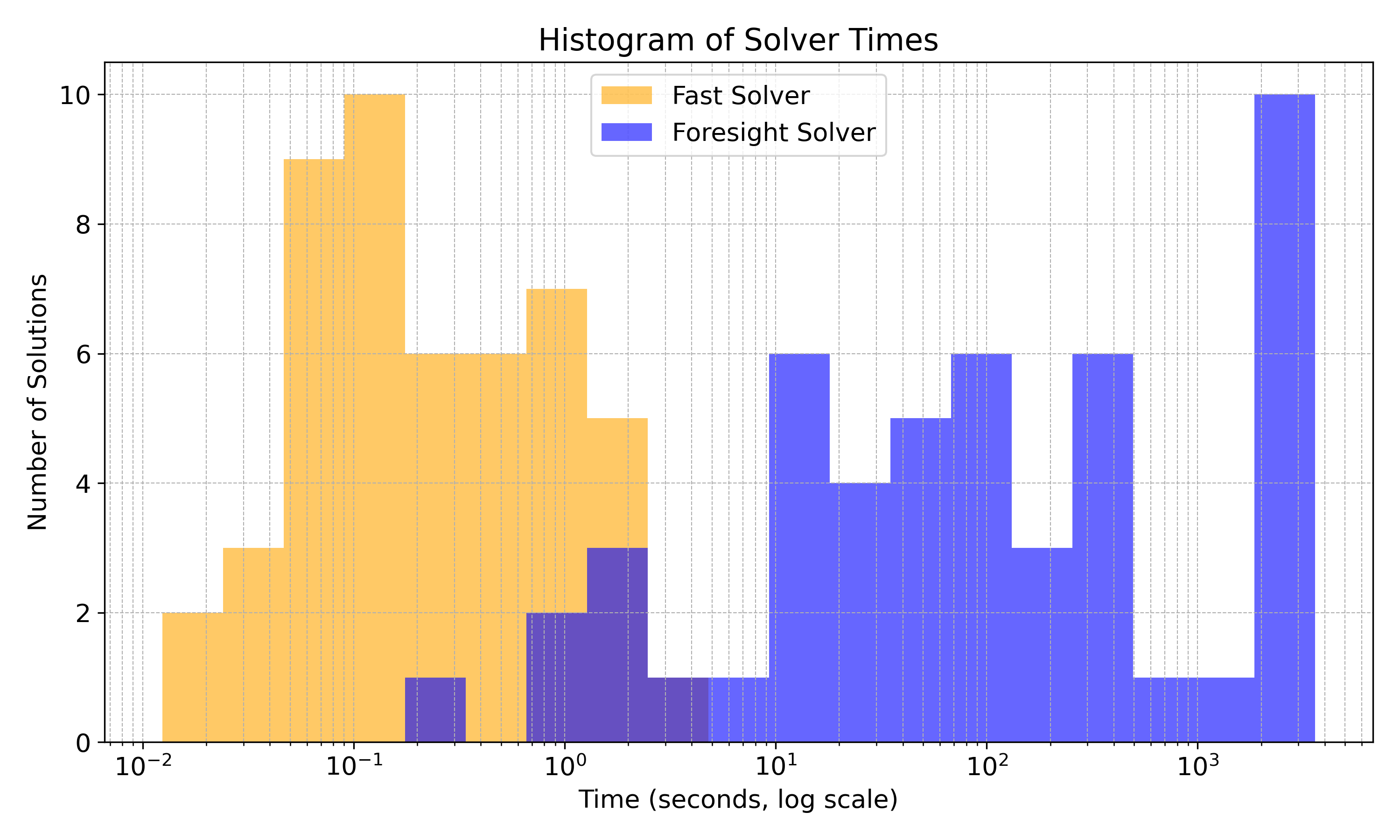}
    \caption{Time required to solve 50 instances using our method vs. the foresight method \cite{towards}.}
    \label{fig:histogram}
\end{figure*}

\begin{figure*}[t]
    \centering
    \includegraphics[scale=0.6]{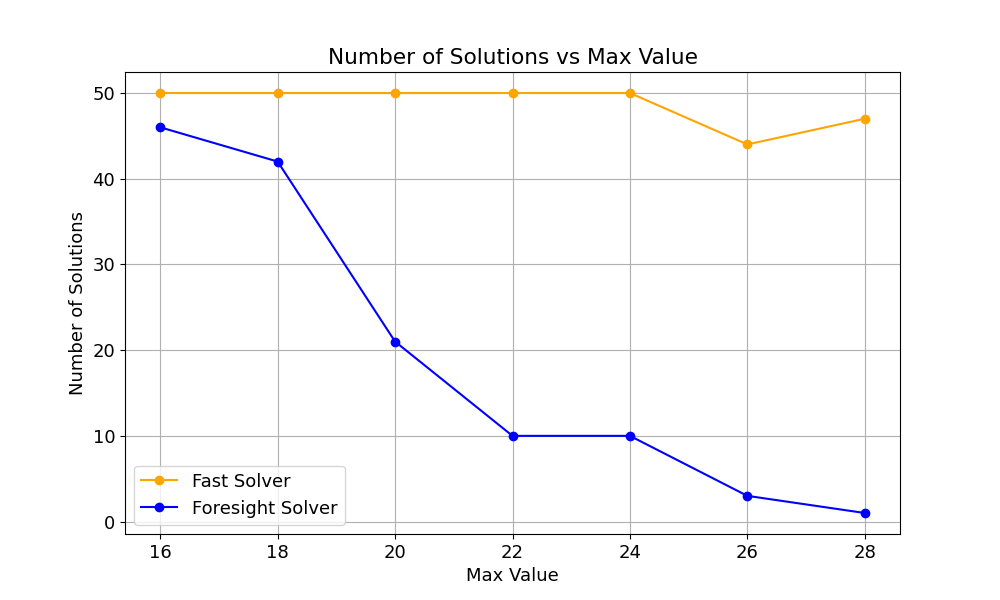}
    \caption{Number of solves from fast vs foresight solvers as max value increases \cite{towards}.}
    \label{fig:solutions}
\end{figure*}

\begin{figure*}[t]
    \centering
    \includegraphics[scale=0.6]{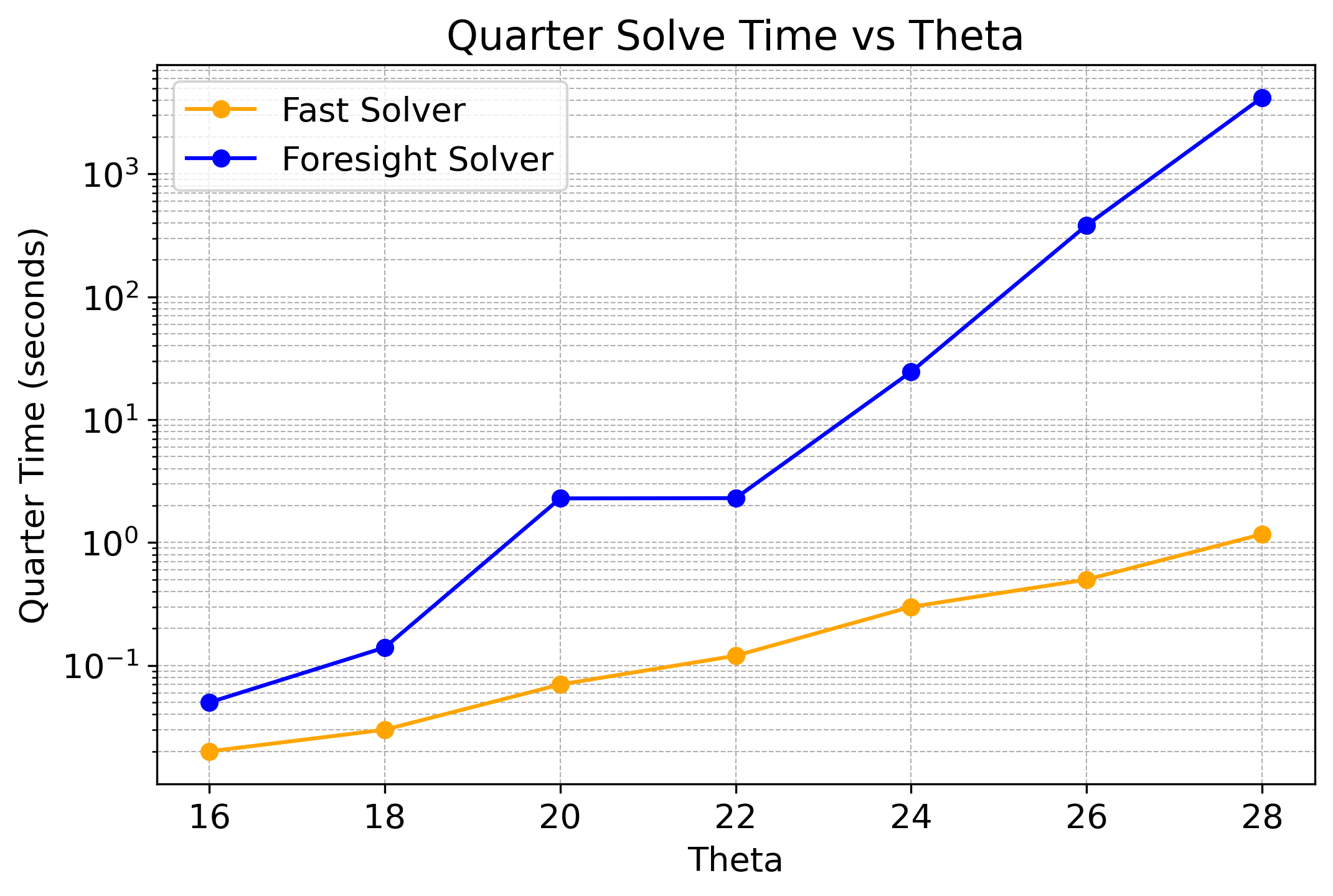}
    \caption{Time needed to solve a quarter of instances as max value increases.}
    \label{fig:quarter}
\end{figure*}

\begin{algorithm}
  \DontPrintSemicolon
  \caption{Fractional Constraint Checker.} \label{algo:checker}
  As input, take a job period with denominator 2 or 3, $j$, the history of scheduling times of the job, $history$, and a next scheduling time, $next$, which we are considering for possible violation \;

  \If{$\textnormal{denominator}(j) = 2$}{
  \If{$\textnormal{len}(history) \geq 2 \textnormal{ and } history[-1] - history[-2] = \textnormal{ceil}(j) \textnormal{ and } next - history[-1] = \textnormal{ceil}(j)$}{
  Return ``constraint violated''
  }
  }
  \If{$\textnormal{denominator}(j) = 3$}{
  \If{$\textnormal{len}(history) < 2$}{
  Return ``constraint passed''
  }
  \If{$\textnormal{numerator}(j) \equiv 1 \pmod{3}$}{
  \If{$\textnormal{len} = 2 \textnormal{ and } history[-1] - history[-2] = \textnormal{ceil} \textnormal{ and } next - history[-1] = \text{ceil}(j)$}{
  Return ``constraint violated''
  }
  \If{$next - history[-1] = \textnormal{ceil}(j) \textnormal{ and } (history[-1] - history[-2] = \textnormal{ceil}(j) \textnormal{ or } history[-2] - history[-3] = \textnormal{ceil}(j))$}{
  Return ``constraint violated''
  }
  }
  \Else{
  \If{$\textnormal{len}(history) > 2 \textnormal{ and } next - history[-1] = \text{ceil}(j) \textnormal{ and } history[-1] - history[-2] = \text{ceil}(j) \textnormal{ and } history[-2] - history[-3] = \text{ceil}(j)$}{
  Return ``constraint violated''
  }
  }
  Return ``constraint passed''
  }
\end{algorithm}

\subsection{Implementation}
\mbox{}
\vspace{5pt}

As a baseline against which to compare the faster pinwheel solver, we use the foresight solver implemented by Gąsieniec et al., which is the fastest published solver, to our knowledge \cite{towards}. This solver, with some enhancement, is also a key component of our fast solver.

The basic idea behind the faster pinwheel solver is that there are many instances for which we can find a schedule that implies a schedule for a particular original instance. In particular, we leverage the partitioning and monotonicity properties \cite{kawamura}. 

For example, the schedulability of the instance [14, 14, 14, 14, 15, 18, 18, 19, 20, 22, 22, 23, 23, 23, 24, 25, 27] is implied by the schedulability of [7, 7, 8, 9, 11, 15, 19, 20, 23, 23, 23]. This is because we have replaced the first two 14's with $14/2 = 7$, the third and fourth 14 with another 7, the two 18's with $18 / 2 = 9$, the two 22's with $22 / 2 = 11$, and the 24, 25, and 27 with $24/3 = 8$.

Theoretically, this means it is ``harder'' to schedule [7, 7, 8, 9, 11, 15, 19, 20, 23, 23, 23]. However, the situation becomes different in the computational setting. In particular, running the foresight solver on the instance [14, 14, 14, 14, 15, 18, 18, 19, 20, 22, 22, 23, 23, 23, 24, 25, 27] takes about 70 minutes to find a solution, whereas running it on [7, 7, 8, 9, 11, 15, 19, 20, 23, 23, 23] takes only 40 seconds. 

The main algorithmic technique for implementing the idea described above more generally is to enumerate a list of good ``partitions'' and implement an ordering mechanism.

Each element of a partition represents a set of elements which are folded into a single element. For example, the partition representing the transition from [14, 14, 14, 14, 15, 18, 18, 19, 20, 22, 22, 23, 23, 23, 24, 25, 27] to [7, 7, 8, 9, 11, 15, 19, 20, 23, 23, 23] is [[0, 1], [2, 3], [5, 6], [9, 10], [14, 15, 16]].

Our algorithm only allows for partition elements with length 2, 3, or 5 (although this can be easily extended), and applies a filter to determine valid elements.

The high-level idea of the filter (\cref{algo:partition}) is to only allow fold operations which do not increase density by too much. Now, whole partitions can be generated recursively such that each element is valid.

The list of possible partitions can be large (1000 or more), and we do not want to consider every partition, so we need a mechanism for ordering partitions based on how likely we think they can generate a quick solution. 

Every partition corresponds to a particular folded list, as we discussed earlier. We analyze two parameters of this list for partition ordering: density and product of elements. In particular, we impose a hard cap on density of 0.95 and consider elements in increasing order of $\sqrt{product} / (0.95 - density)^2$.

The product of elements correlates with the time needed to solve an instance since it is the time it would take to traverse the full state space of an instance \cite{towards}. Density correlates because generally, instances with higher density are more challenging to solve. 

Once we can generate partitions, it is possible to use good partitions and delegate to the foresight solver, and this method already achieves significant performance benefits. To do even better, we instrument the foresight solver with the ability to handle fractional elements, specifically with denominators of 2 and 3. Allowing these denominators is also the reason why only the $\text{len} = 5$ case in \cref{algo:partition} involves a floor operation.

To our knowledge, this is the first solver that allows even a restricted set of non-integer elements. In particular, for a job with real-valued period $r$, in any interval of length $a$, the job must be scheduled at least $\lfloor \frac{a}{r} \rfloor$ times. The key difficulty in allowing for fractional periods is that this condition requires considering an infinite number of possible $a$ (or at least the length of the list currently under consideration). 

Integer periods require us to only look at the previous element to determine whether the next scheduling is valid, but fractional elements require us to look back further. We have designed a casework-based algorithm which can test for a constraint violation in $O(1)$ time in the cases of denominators of 2 and 3 (\cref{algo:checker}). 

The last step of our algorithm is to rearrange the solution of the folded instance into that of the original instance, which we achieve by storing additional mapping metadata during folding and mapping each job of the folded instance into a series of jobs for the original instance.

\vspace{5pt}

\subsection{Results}
\mbox{}
\vspace{5pt}

As a proof of concept for the power of the faster pinwheel solver, we generate 50 random instances with a density of around 0.9 and record the time it takes to solve them using the foresight algorithm compared to our fast algorithm. 

In particular, we use the following rejection sampling process to generate each instance: 

\begin{enumerate}
    \item Generate a uniformly random integer $n$ between 10 and 15.
    \item Generate $n$ independent and uniformly random integers between 1 and 25
    \item Treating the list of $n$ integers as a pinwheel instance, compute the density of the instance (sum of inverses). If the density is between 0.89 and 0.91, return the instance; otherwise, restart the process.
\end{enumerate}

The average time to solve the instances under our fast method was 0.48 seconds, whereas that under the foresight method was 808 seconds, meaning that we have an over 1600x improvement over the foresight solver. Furthermore, both solvers were limited to 1 hour per instance. While the fast solver was able to solve all instances within the time limit, the foresight solver was only able to solve 41 (82\%) of the instances. The complete histogram of results is given in \cref{fig:histogram}.

We now provide results in the direction that the fast solver has significantly better scalability properties than the foresight solver. As with the unfolding operation, this means making the schedulability of large sets of instances more tractable, as is needed for improving density bounds for small $m$, as well as for other problems, such as bamboo garden trimming \cite{perpetual}.

To study scalability of the fast solver, we need a parameterized method to generate random instances. We use the following random process, given a parameter $max$:

\begin{enumerate}
    \item Generate a uniformly random integer $n$ in between $\lfloor \frac{max}{3} \rfloor$ and $\lfloor \frac{2 \cdot max}{3} \rfloor$.
    \item Generate $n$ independent and uniformly random integers between $\lfloor \frac{max}{2} \rfloor$ and $max$.
    \item If the density of the corresponding instance is between 0.89 and 0.91, return the instance, otherwise restart the process.
\end{enumerate}

In \cref{fig:solutions} we show how the number of solutions generated by the foresight method scales compared to that of the fact method. In particular, both solvers are given 10 seconds per instance in each case. The fast method scales well even under a constant time limit. On the other hand, the foresight method rapidly degrades, with only one solution discovered within 10 seconds at a maximum value of 28.

We now study essentially the complement of the problem in \cref{fig:solutions}. Specifically, we previously showed how the number of solutions scale given constant time. In \cref{fig:quarter} we study how the amount of time needed scales given a constant fraction of instances to solve. Specifically, we record the time needed to solve the fastest quarter (specifically 12 out of 50) of instances for each possible maximum value between 16 and 28. 

Both the fast solver and foresight solver demonstrate exponential growth in the time needed to solve the first quartile. However, the fast solver has a smaller exponential growth factor, as seen by the smaller slope of the line in \cref{fig:quarter}. This is critical for solving instances for relatively small values of $\theta$ in folding-related problems, but beyond the $\theta = 22$ value studied by Kawamura \cite{kawamura}. The fast solver goes from a 2.5x performance improvement in the max case of 16 to a 3580x improvement in the max case of 28. 

\section{Conclusion}
\mbox{}
\vspace{5pt}

We have shown that $G(4) \geq 0.84$. Previously, no bound better than $\frac{5}{6}$ was known for any $m$ less than 100, and in fact, four is the smallest $m$ for which a bound better than $\frac{5}{6}$ can be proven. 

Our unfolding operation and fast pinwheel solver demonstrate exponential improvements over the previous best-known methods. We hope that future work can leverage these to establish a strictly increasing sequence of density bounds for $m = 5, 6, 7, 8, 9, \text{ and } 10$. 

Another direction for future work is to implement additional optimizations to the fast solver, particularly enabling it to scale polynomially in restricted cases.

\bibliographystyle{siamplain}
\bibliography{references}

\end{document}